\numberwithin{equation}{section}
\theoremstyle{plain}
\newtheorem{proposition}{Proposition}
\theoremstyle{definition}
\algnewcommand{\LeftComment}[1]{\Statex \(\triangleright\) #1}
\newcommand{\tikzmark}[1]{\tikz[overlay,remember picture] \node (#1) {};}
\newcommand*{\SpaceReservedForComments}{1.7cm}%
\newcommand*{\HorizontalOffset}{-0.5em}%
\newcommand*{\VerticalOffset}{0.7ex}%
\newcommand*{\AddNote}[4][]{%
    %% #1 = draw options
    %% #2 = top line number to start comment from
    %% #3 = bottom line number where comment ends
    %% #4 = text of comment
    \begin{tikzpicture}[overlay, remember picture]
        \draw [decoration={brace,amplitude=0.5em},decorate,ultra thick,red, #1]
            ($(#3)+(\HorizontalOffset,-\VerticalOffset)$) --  ($(#2)+(\HorizontalOffset,\VerticalOffset)$)
            node [align=left, text width=\SpaceReservedForComments-1.0em, pos=0.5, anchor=east] {#4};
    \end{tikzpicture}
}%
    \algrenewcommand\alglinenumber[1]{\tikzmark{\arabic{ALG@line}}\tiny#1:}
\newcommand{\multiline}[1]{%
  \begin{tabularx}{\dimexpr\linewidth-\ALG@thistlm}[t]{@{}X@{}}
    #1
  \end{tabularx}
}
\begin{document}

\begin{frontmatter}

\title{Event-scheduling algorithms with Kalikow decomposition for simulating potentially infinite neuronal networks%\thanks{Grants or other notes
%about the article that should go on the front page should be
%placed here. General acknowledgments should be placed at the end of the article.}
}
\runtitle{}

\begin{aug}

\author{\fnms{T.C.} \snm{Phi}\thanksref{m1}
\ead[label=e2]{cuongphi@unice.com}}
\and
\author{\fnms{A.} \snm{Muzy}\thanksref{m2}\ead[label=e2]{muzy@i3s.unice.fr} 
}\and 
\author{\fnms{P.} \snm{Reynaud-Bouret}\thanksref{m1}\ead[label=e2]{reynaudb@unice.fr}
}

\affiliation{Universit\'{e} C\^{o}te d'Azur, CNRS, LJAD, France  \thanksmark{m1} and Universit\'{e} C\^{o}te d'Azur, CNRS, I3S,France\thanksmark{m2}}

\thankstext{t1}{\today}
\thankstext{m1}{Universit\'{e} C\^{o}te d'Azur, CNRS, LJAD, France}
\thankstext{m2}{Universit\'{e} C\^{o}te d'Azur, CNRS, I3S,France}

%\thankstext{t2}{}
%\thankstext{t3}{}
%\thankstext{t4}{}
\runauthor{T.C. Phi, A. Muzy and P. Reynaud-Bouret}

\end{aug}

\begin{abstract}
Event-scheduling algorithms can compute in continuous time the next occurrence of points (as events) of a counting process based on their current conditional intensity. In particular event-scheduling algorithms can be adapted to perform the simulation of finite neuronal networks activity. These algorithms are based on Ogata's thinning strategy \cite{Oga81}, which always needs to simulate the whole network to access the behaviour of one particular neuron of the network. On the other hand, for discrete time models, theoretical algorithms based on Kalikow decomposition can pick at random influencing neurons and perform a perfect simulation (meaning without approximations) of the behaviour of one given neuron embedded in an infinite network, at every time step. These algorithms are currently not computationally tractable in continuous time. To solve this problem, an event-scheduling algorithm with Kalikow decomposition is proposed here for the sequential simulation of  point processes neuronal models satisfying this decomposition. This new algorithm is applied to infinite neuronal networks whose finite time simulation is a prerequisite to realistic brain modeling.
\end{abstract}

\begin{keyword}[class=MSC]
\kwd{}
\kwd{}
\kwd{}
\end{keyword}

\begin{keyword}
\kwd{Kalikow decomposition}
\kwd{Discrete event simulation}
\kwd{Point process}
\kwd{Infinite neuronal networks}
\end{keyword}

\end{frontmatter}
\newcommand{\pat}[1]{\textcolor{magenta}{#1}}

\maketitle

\section{Introduction}
\label{intro}
Point processes in time are stochastic objects that model efficiently event occurrences with a huge variety of applications:  time of deaths, earthquake occurrences, gene positions on DNA strand, etc. \cite{andersen,VeO82,RBS}). 
    
Most of the time, point processes are multivariate \cite{didelez}  in the sense that either several processes are considered at the same time, or in the sense that one process regroups together all the events of the different processes and marks them by their type. A typical example consists in considering either two processes, one counting the wedding events of a given person and one counting the children birth dates of the same person or only one marked process which regroups all the possible dates of birth or weddings independently  and adds one mark per point, here wedding or birth.
 
Consider now a network of neurons each of them emitting action potentials (spikes). These spike trains can be modeled by a multivariate point process with  a potentially infinite number of marks, each mark representing one  given neuron. The main difference between classical models of multivariate point processes and the ones considered in particular for neuronal networks is the size of the network. A human brain consists in about $10^{11}$ neurons whereas a cockroach contains already about $10^6$ neurons. Therefore the simulation of the whole network is either impossible or a very difficult and computationally intensive task for which  particular tricks depending on the shape of the network or the point processes have to be used  \cite{peters,dassios,Mascart2019}.

Another point of view, which is the one considered here,  is to simulate, not the whole network, but the events of one particular node or neuron, embedded in and interacting with  the whole network. In this sense, one might consider an infinite network.
This is the mathematical point of view considered in a series of papers  \cite{gl1,gl2,orb} and based on Kalikow decomposition \cite{Kalikow} coupled with perfect simulation theoretical algorithms \cite{CFF,FFG}. However these works are suitable in discrete time and only provide a way to decide at each time step if the neuron is spiking or not. They cannot operate in continuous time, i.e. they cannot directly  predict the next event (or spike). Up to our knowledge, there exists only one attempt of using such decomposition in continuous time \cite{HL}, but the corresponding simulation algorithm is purely theoretical in the sense that the corresponding conditional Kalikow decomposition should exist given the whole infinite realization of a multivariate Poisson process, with an infinite number of marks, quantity which is impossible to simulate in practice.

The aim of the present work is to present an algorithm which
\begin{itemize}
\item can operate in continuous time in the sense that it can predict the occurrence of the next event. In this sense, it is an event-scheduling algorithm;
\item can simulate the behavior of one particular neuron embedded in a potentially infinite network without having to simulate the whole network;
\item is based on an unconditional Kalikow decomposition and in this sense, can only work for point processes with this decomposition.
\end{itemize}
In Section 2, we specify the links between event-scheduling algorithms and the classical point process theory. In Section 3, we give the Kalikow decomposition. In Section 4, we present the backward-forward perfect simulation algorithm and prove why it almost surely ends under certain conditions. In Section 5, we provide simulation results and a conclusion is given in Section 6.

\section{Event-scheduling simulation of point processes}
On the one hand, simulation algorithms of multivariate point processes \cite{Oga81} are quite well known in the statistical community but as far as we know quite confidential in the simulation (computer scientist) community. On the other hand, event-scheduling simulation first appeared in the mid-1960s~\cite{tocher} and was formalized as discrete event systems in the mid-1970s~\cite{tms76} to interpret very general  simulation algorithms scheduling ``next events''. A basic event-scheduling algorithm ``jumps'' from one event occurring at a time stamp $t \in \mathds{R}_0^+$ to a next event occurring at a next time stamp $t' \in \mathds{R}_0^+$, with $t'\geq t $. In a discrete event system, the state of the system is considered as changing at times $t,t'$ and conversely unchanging in between \cite{tms18}. In \cite{Mascart2019}, we have written the main equivalence between the point processes simulation algorithms and the discrete event simulation set-up, which led us to a significant improvement in terms of computational time when huge but finite networks are into play. Usual event-scheduling simulation algorithms have been developed considering independently the components (nodes) of a system. Our approach considers new algorithms for activity tracking simulation~\cite{cise-muzy}. The event activity is tracked from active nodes to children (influencees).

Here we just recall the main ingredients that are useful for the sequel.

To define point processes, we need a  filtration or history $(\mathcal{F}_t)_{t\geq 0}$. Most of the time, and this will be the case here, this filtration (or history) is restricted to the internal history of the multivariate process $(\mathcal{F}^{int}_t)_{t\geq 0}$, which means that at time $t-$, i.e. just before time $t$, we only have access to the events that have already occurred in the past strictly before time $t$, in the whole network. The conditional intensity, $\phi_i(t|\mathcal{F}^{int}_{t-})$, of the  point process, representing neuron $i$ gives the instantaneous firing rate, that is the frequency of spikes, given the past contained in $\mathcal{F}^{int}_{t-}$. Let us just mention two very famous examples.

If $\phi_i(t|\mathcal{F}^{int}_{t-})$ is a deterministic constant, say $M$, then the spikes of neuron $i$ form a homogeneous Poisson process with intensity $M$. The occurrence of spikes are completely independent from what occurs elsewhere in the network and from the previous occurrences of spikes of neuron $i$.

If we denote by ${\bf I}$ the set of neurons, we can also envision the following form for the conditional intensity:
\begin{equation}
\label{Hawkesdef}
\phi_i(t|\mathcal{F}^{int}_{t-})= \nu_i + \sum_{j\in {\bf I}} w_{ij} ({\bf Nb}^j_{[t-A,t)}\wedge M).
\end{equation}

This is a particular case of generalized Hawkes processes \cite{bm}. More precisely $\nu_i$ is the spontaneous rate (assumed to be less than the deterministic upper bound $M>1$) of neuron $i$. Then every neuron in the network can excite  neuron $i$: more precisely, one counts the number of spikes that have been produced by neuron $j$ just before $t$, in a window of length $A$, this is $ {\bf Nb}^j_{[t-A,t)}$; we clip it by the upper bound $M$ and modulate its contribution to the intensity by the positive synaptic weight between neuron $i$ and neuron $j$, $w_{ij}$. For instance, if there is only one spike in the whole network just before time $t$, and if this happens on neuron $j$, then the intensity for neuron $i$ becomes $\nu_i+w_{ij}$. The sum over all neurons $j$  mimics the synaptic integration that takes place at neuron $i$. As a renormalization constraint, we assume that 
$\sup_{i\in {\bf I}}\sum_{j\in {\bf I}} w_{ij} <1$.
This ensures in particular that such a process has always a conditional intensity bounded by $2M$.

Hence, starting for instance at time $t$, and given the whole past, one can compute the next event in the network by computing for each node of the network the next event in absence of other spike apparition. To do so, remark that in absence of other spike apparition, the quantity 
$\phi_i(s|\mathcal{F}^{int}_{s-})$ for $s>t$ becomes for instance in the previous example $$\phi^{abs}_i(s,t)=\nu_i + \sum_{j\in {\bf I}} w_{ij} ({\bf Nb}^j_{[s-A,t)}\wedge M), $$
meaning that we do not count the spikes that may occur after $t$ but before $s$. This can be generalized to more general point processes. The main simulation loop is presented in Algorithm~\ref{alg:1}
\begin{algorithm}[H]
    \caption{Classical point process simulation algorithm} 
    \label{algclassic}
    \begin{algorithmic}[1]
    \Statex \LeftComment{With $[t_0,t_1]$ the interval of simulation}
    \State Initialize the family of points ${\bf P}=\emptyset$
    \Statex \LeftComment{Each point is a time $T$ with a mark,  $j_T$, which  is the neuron on which $T$ appears} 
    \State Initialize $t\leftarrow t_0$
    \Repeat
        \For{each neuron $i \in {\bf I}$}
            \State Draw independently an exponential variable $E_i$ with parameter 1 
            \State Apply the inverse transformation, that is, find $T_i$ such that $$ \int_t^{T_i} \phi^{abs}_i(s,t) ds = E_i.$$
        \EndFor
    
         \State \multiline{Compute the time $T$ of the next spike of the system after $t$, and the neuron where the spike occurs by $T \leftarrow \min_{i\in {\bf I}}T_i$, with $j_T \leftarrow \arg\min_{i\in {\bf I}}T_i$}
         \If{$T\leq t_1$} 
         \State append $T$ with mark $j_T$ to ${\bf P}$
         \EndIf
         \State $t\leftarrow T$
    \Until{$t > t_1 $}
    \end{algorithmic} 
    \label{alg:1}
\end{algorithm} 

Note that the quantity $\phi^{abs}_i(s,t)$ can be also seen as the hazard rate of the next potential point $T_i^{(1)}$ after $t$. It is a discrete event approach with the state corresponding to the function $\phi^{abs}_i(.,t)$.

Ogata \cite{Oga81}, inspired by Lewis' algorithm \cite{lewis}, added a thinning (also called rejection) step on top of this procedure because the integral $ \int_t^{T_i^{(1)}} \phi^{abs}_i(s,t) ds$ can be very difficult to compute. To do so (and simplifying a bit), assume that $\phi_i(t|\mathcal{F}^{int}_{t-})$ is upper bounded by a deterministic constant $M$. This means that the point process has always less points than a homogeneous Poisson process with intensity $M$. Therefore Steps 5-6 of Algorithm~\ref{alg:1} can be replaced by the generation of an exponential of parameter $M$, $E'_i$ and deciding whether we accept or reject the point with probability $\phi^{abs}_i(t+E'_i,t)/M$. There are a lot of variants of this procedure: Ogata's original one uses actually the fact that the minimum of exponential variables, is still an exponential variable. Therefore one can propose a next point for the whole system, then accept it for the whole system and then decide on which neuron of the network the event is actually appearing. More details on the multivariate Ogata's algorithm can be found in \cite{Mascart2019}.

As we see here, Ogata's algorithm is very general but clearly needs to simulate the whole system to simulate only one neuron. Moreover starting at time $t_0$, it does not go backward and therefore cannot simulate a Hawkes process in stationary regime. There has been specific algorithms based on clusters representation that aim at perfectly simulate particular univariate Hawkes processes \cite{mr}. The algorithm that we propose here, will also overcome this flaw.

\section{Kalikow decomposition}
\label{Kalikow}
Kalikow decomposition relies on the concept of neighborhood, denoted by $v$, which are picked at random and which gives the portion of time and neuron subsets that we need to look at, to move forward. Typically, for a positive constant $A$, such a $v$ can be:
\begin{itemize}
\item $\{(i, [-A,0))\}$, meaning we are interested only by the spikes of neuron $i$ in the window $[-A,0)$;
\item $\{(i,[-2A,0)),(j,[-2A,-A))\}$, that is, we need the spikes of neuron $i$ in the window $[-2A,0)$ and the spikes of neuron $j$ in the window $[-2A,-A)$;
\item the emptyset $\emptyset$, meaning that we do not need to look at anything to pursue.
\end{itemize}

We need to also define $l(v)$ the total time length of the neighborhood $v$ whatever the neuron is. For instance, in the first case, we find $l(v) = A$, in the second $l(v) = 3 A$ and in the third $l(v)= 0$.

We are only interested by stationary processes, for which the conditional intensity, $\phi_{i}(t \mid \mathcal{F}^{int}_{t^{-}})$, only depends on the intrinsic distance between the previous points and the time $t$ and not on the precise value of $t$ {\it per se}. In this sense the rule to compute the intensity may be only defined at time 0 and then shifted by $t$ to have the conditional intensity at time $t$.  In the same way, the timeline of a neighborhood $v$ is defined as a subset of $\mathds{R}_-^*$ so that information contained in the neighborhood is included in $\mathcal{F}^{int}_{0-}$, and $v$ can be shifted (meaning its timeline is shifted) at position $t$ if need be. We assume that ${\bf I}$ the set of neurons is countable and that we have a countable set of possibilities for the neighborhoods $\mathcal{V}$.

 Then, we say that the process admits {\it a Kalikow decomposition} with bound $M$ and neighborhood family $\mathcal{V}$, if for any neuron $i \in \mathbf{I}$, for all $v \in \mathcal{V}$ there exists a non negative $M$-bounded  quantity $\phi_i^{v}$, which is  $\mathcal{F}^{int}_{0-}$ measurable and whose value only depends on the points  appearing in the neighborhood $v$, and a probability density function $\lambda_{i}(.)$ such that
\begin{equation} \label{Kalikow decomposition}
\phi_{i}(0 \mid \mathcal{F}^{int}_{0^{-}}) = \lambda_i(\emptyset) \phi_i^{\emptyset} + \sum\limits_{v \in \mathcal{V}, v\neq \emptyset}\lambda_i(v) \times \phi_i^{v}
\end{equation}
with $\lambda_i({\emptyset}) + \sum\limits_{v \in \mathcal{V}, v\neq \emptyset} \lambda_i(v) =1.$

Note that because of the stationarity assumptions, the rule to compute the $\phi_i^{v}$'s can be shifted at time $t$, which leads to a predictable function that we call $\phi_i^{v_t}(t)$ which only depends on what is inside $v_t$, which is the neighborhood $v$ shifted by $t$. Note also that $\phi_i^{\emptyset}$, because it depends on what happens in an empty neighborhood, is a pure constant.

The interpretation of \eqref{Kalikow decomposition} is tricky and is not as straightforward as in the discrete case (see \cite{orb}). The best way to understand it is to give the theoretical algorithm for simulating the next event on neuron $i$ after time $t$ (cf. Algorithm~\ref{algth}).

\begin{algorithm}[H]
    \caption{Kalikow theoretical simulation algorithm} 
     \label{algth}
    \begin{algorithmic}[1]
 \Statex \LeftComment{With $[t_0,t_1]$ the interval of simulation for neuron $i$}
 \Statex
    \State Initialize the family of points ${\bf P}=\emptyset$
    
\Statex \LeftComment{   NB: since we are only interested by points on neuron $i$, $j_T=i$ is a useless mark here.} 
\Statex
    \State Initialize $t\leftarrow t_0$
    \Repeat
            \State Draw an exponential variable $E$ with parameter $M$, and compute $T=t+E$.
            \State \multiline{Pick a random neighborhood according to the distribution $\lambda_i(.)$ given in the Kalikow decomposition and shift the neighborhood at time $T$: this is $V_{T}$.}
            \State Draw $X_T$ a Bernoulli variable with parameter $\dfrac{\phi_i^{V_{T}}(T)}{M}$
            \If{$X_T=1$ and $T\leq t_1$} 
               \State append $T$  to ${\bf P}$
            \EndIf
            \State $t\leftarrow T$
    \Until{$t > t_1 $}
    \end{algorithmic}

\end{algorithm} 

\begin{comment}
        \State Draw $E$, an Exponential variable with parameter $M$, and compute $T=t+E$.
        \State \multiline{Pick a random neighborhood according to the distribution $\lambda_i(.)$ given in the Kalikow decomposition, and shift it at position $T$: this is $V^i_{T}$}
        \State At random and independently from anything else, point $T$ is accepted as the next point on neuron $i$ with probability $\dfrac{\phi_i^{V^i_{T}}(T)}{M}$.
        \State If $T$ is rejected, $t\leftarrow T$ and go back to Step 1. Repeat until a $T$ is accepted.
\end{comment}
This Algorithm is close to Algorithm \ref{algclassic}  but adds a neighborhood choice (Step 5)  with a thinning step (Steps 6-9). 

In Appendix~\ref{appendix:A}, we prove that this algorithm indeed provides a point process with an intensity given by \eqref{Kalikow decomposition} shifted at time $t$.

The previous algorithm cannot be put into practice because the computation of $\phi_i^{V_{T}}$ depends on the points in $V_T$, that are not known at this stage. That is why the efficient algorithm that we propose in the next section goes backward in time before moving forward.

\section{Backward Forward algorithm}

Let us now describe the complete \textit{Backward Forward algorithm} (cf. Algorithm~\ref{algfull}). Note that to do so, the set of points ${\bf P }$ is not reduced, as in the two previous algorithms, to the set of points that we want to simulate but this contains all the points that need to be simulated to perform the task.

\begin{algorithm}[H]
\hspace*{\SpaceReservedForComments}{}%
\begin{minipage}{\dimexpr\linewidth-\SpaceReservedForComments\relax}
    \caption{Backward Forward Algorithm} % give the algorithm a caption
    
    \label{algfull}
    \begin{algorithmic}[1]
        \Statex \LeftComment{With $[t_0,t_1]$ the interval of simulation for neuron $i \in {\bf I}$}
        \Statex
        \State Initialize the family ${\bf V}$ of non empty neighborhoods with $\{(i,[t_0,t_1])\}$
        \State Initialize the family of points ${\bf P}=\emptyset$
        
        \Statex \LeftComment{Each point is a time $T$ with 3 marks: $j_T$ is the neuron on which $T$ appears, $V_T$ for the choice of neighborhood, $X_T$ for the thinning step (accepted/rejected)}
        \State Draw $E$ an exponential variable with parameter $M$ 
        \State Schedule $T_{next}=t_0+E$
        
        \While{$T_{next}<t_1$}
            \State  \multiline{Append to {\bf P}, the point $T_{next}$, 
         with 3 marks: $j_{T_{next}}=i$,  $V_{T_{next}}=$ n.a. and $X_{T_{next}}=$ n.a.  (n.a. stand for {\it not assigned yet})}
            \While{There are points $T$ in {\bf P} with $V_T=$ n.a.}
                \For{ each point $T$ in {\bf P} with $V_T=$ n.a. }
                    \State Update $V_T$ by drawing $V_T$ according to $\lambda_{j_T}$ shifted at time $T$.
                    \If{$V_T\neq \emptyset$}
                        \State \multiline{Find the portion of time/neurons in $V_T$ 
                        which does not intersect the existing non empty neighborhoods in ${\bf V}$}
                        \State Simulate on it a Poisson process with rate $M$ 
                        \State \multiline{Append the simulated points, $T'$, if any, to ${\bf P}$ 
                        with their neuron $j_{T'}$ and with $V_{T'}=X_{T'}=$n.a}
                        \State Append $V_T$ to ${\bf V}$ %\cuong{I think we should put this line after Endif, because we also need to update the neighborhood even it is empty, it will be needed in the forward step.}
                    \EndIf
                \EndFor
            \EndWhile
            \State Sort the $T$'s in ${\bf P}$ with $X_T=$n.a. in increasing order
            \For{each of them starting with the most backward}
                \State Draw $X_T$ as a Bernoulli variable with parameter $\frac{\phi_{j_T}^{V_T}(T)}{M}$
            \EndFor
            \State Draw $E'$ another exponential variable with parameter $M$ 
            \State $T_{next}\leftarrow T_{next} +E'$
        \EndWhile
        \State The desired points are the points in ${\bf P}$ with marks $j_T=i$, $X_T=1$ and that appear in $[t_0,t_1]$
        \LeftComment{It is possible that the algorithm generated points before $t_0$ and they have to be removed}
  \end{algorithmic} 
    \AddNote[black]{1}{5}{Init.}
    \AddNote[orange]{6}{6}{Init. Marks}
    \AddNote[green]{7}{17}{Backward}
    \AddNote[purple]{18}{25}{Forward}
\end{minipage}
\end{algorithm} 

At the difference with Algorithm \ref{algth}, the main point is that in the backward part we pick at random all the points that may influence the thinning step. The fact that this loop ends comes from the following Proposition.

\begin{proposition} \label{prop 1}
If 
\begin{equation} \label{Sparsity}
\sup_{i\in{\bf I}}\sum\limits_{v \in \mathcal{V}} \lambda_i(v) l(v) M < 1.
\end{equation} 
then the backward part of Algorithm \ref{algfull} ends almost surely in finite time.
\end{proposition}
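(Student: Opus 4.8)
The plan is to recognize the backward loop (Steps~7--17 of Algorithm~\ref{algfull}) as the exploration of a random rooted tree, to stochastically dominate that tree by the family tree of a subcritical Galton--Watson process, and then to invoke almost sure extinction. Fix one execution of the inner \emph{while} loop, started from the single point $T_{next}$ on neuron $i$ with $V_{T_{next}}=$ n.a. Organize the points created during this execution into a tree rooted at $T_{next}$, where the children of a point $T$ (living on neuron $j_T$) are exactly the Poisson points $T'$ produced at Step~13 when its neighborhood $V_T$ is drawn at Step~9. The inner loop terminates precisely when no point with an unassigned neighborhood remains, i.e. when this tree has been fully explored; hence it suffices to prove the tree is finite almost surely. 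All the driving randomness is independent: the neighborhood draws $V_T\sim\lambda_{j_T}$ and, conditionally on these, independent rate-$M$ Poisson processes.

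\textbf{Stochastic domination.} When a point $T$ on neuron $j_T$ is processed, its number of children is the number of points of a rate-$M$ Poisson process restricted to the portion of $V_T$ \emph{not already covered} by the current family ${\bf V}$ (Steps~11--13). That portion has total time length at most $l(V_T)$, so, whatever the current contents of ${\bf P}$ and ${\bf V}$, the number of children of $T$ is stochastically dominated by a Poisson variable with mean $M\,l(V_T)$. Averaging over the neighborhood choice, the conditional expected number of children of any point, on any neuron, is bounded by
\begin{equation*}
\sup_{j\in{\bf I}}\ \sum_{v\in\mathcal V}\lambda_j(v)\,l(v)\,M \;=:\; \bar m ,
\end{equation*}
which is strictly less than $1$ by hypothesis~\eqref{Sparsity}. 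I would make this precise by a monotone coupling: process the points in the same breadth-first order in the algorithm and in a Galton--Watson process whose offspring law is the $\lambda_j$-mixture of Poisson$(M\,l(v))$ laws (with $j$ chosen to realize the supremum), using that erasing the already-covered part of $V_T$ can only remove offspring and that reassigning a child to a different neuron does not increase the offspring-mean bound $\bar m$.

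\textbf{Conclusion.} A Galton--Watson process with offspring mean $\bar m<1$ is subcritical, hence becomes extinct almost surely, and its total progeny has finite expectation $1/(1-\bar m)$, so is finite almost surely. By the domination, the exploration tree is almost surely finite: the inner \emph{while} loop marks and processes only finitely many n.a.\ points and therefore ends in finite time. Finally, the outer \emph{while} loop of Algorithm~\ref{algfull} is entered only finitely many times, since $T_{next}$ is incremented by i.i.d.\ $\mathrm{Exp}(M)$ steps and must eventually exceed $t_1$; thus all the backward phases together terminate almost surely.

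\textbf{Main obstacle.} The only genuinely delicate point is the stochastic domination argument: one must verify that the subtraction at Step~11 of the part of $V_T$ already present in ${\bf V}$, combined with the fact that offspring may live on neurons other than their parent, does not spoil the comparison with a \emph{single} subcritical Galton--Watson process. This is exactly what the uniform bound $\bar m$ and the monotone (overlap-can-only-help) coupling are designed to handle; everything else is a routine appeal to the extinction criterion for subcritical branching processes.
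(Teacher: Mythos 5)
Your proposal follows essentially the same route as the paper's proof: both view the backward loop as the exploration of a tree rooted at $T_{next}$, dominate it by the tree obtained when one simulates the Poisson process on the \emph{whole} neighborhood $V_T$ (ignoring the overlap subtraction of Steps~11--13), and conclude extinction of the resulting branching structure from the uniform mean-offspring bound $\bar m=\sup_i\sum_v\lambda_i(v)l(v)M<1$. One small caveat: your coupling with a \emph{single} Galton--Watson process ``with $j$ chosen to realize the supremum'' is not justified by the mean bound alone, since a family of offspring laws with means below $\bar m$ need not be stochastically dominated by any one of them; the paper sidesteps this by keeping the multi-type (neuron-marked) branching process and iterating the conditional bound $\mathds{E}(Z^{(k+1)}\mid \tilde{\mathcal{C}}^{k})\le Z^{(k)}\bar m$ to get $\mathds{E}(Z^{(k)})\le\bar m^{k}\to 0$, which already yields almost sure extinction --- a step you in fact have available from your own uniform bound on the conditional expected number of children.
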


The proof is postponed to Appendix~\ref{appendix:B}. It is based on branching process arguments. Basically if in Steps 8-16, we produce in average less than one point, either because we picked the empty set in $V_T$ or because the simulation of the Poisson process ended up with a small amount of points, eventually none, then the loop ends almost surely because there is an extinction of the corresponding branching process.

In the backward part, one of the most delicate part consists in being sure that we add new points only if we have not visited this portion of time/neurons before (see Steps 11-13). If we do not make this verification, we may not have the same past depending on the neuron we are looking at and the procedure would not simulate the process we want.

In the forward part, because the backward algorithm stopped just before, we are first sure to have assess all $V_T$'s. Since $\phi_{j}^{V_t}(t)$ is $\mathcal{F}^{int}_{t-}$ measurable, for all $t$, $\phi_{j_T}^{V_T}(T)$ only depends on the points in ${\bf P}$ with mark $X_T=1$ inside $V_T$. The problem in Algorithm \ref{algth}, phrased differently, is that we do not know the marks $X_T$ of the previous points when we have to compute $\phi_{j_T}^{V_T}(T)$.
But in the forward part of Algorithm \ref{algfull}, we are  sure that the most backward point for which the thinning ($X_T=$n.a.) has not taken place, satisfies
\begin{itemize}
\item either $V_T=\emptyset$
\item or $V_T\neq\emptyset$ but either there are no simulated points in the corresponding $V_T$ or the points there come from previous rounds of the loop (Step 5). Therefore their marks $X_T$ have been assigned.
\end{itemize}
Therefore, with the Backward Forward algorithm, and at the difference to Algorithm \ref{algth}, we take the points in an order for which we are sure that we know the previous needed marks.

Figure~\ref{fig:0} describes an example to go step by step through 
Algorithm~\ref{algfull}. The \emph{backward steps} determine all the points that may influence the acceptation/rejection of point $T_{next}$. Notice that whereas usual activity tracking algorithms for point processes~\cite{Mascart2019} automatically detect the active children (influencees), activity tracking in the backward steps detect the parents (influencers). The \emph{forward steps} finally select the points.

\begin{figure}[H]
\captionsetup{type=figure}
\begin{subfigure}{\textwidth}
  \centering
  % include first image
  \includegraphics[width=1\linewidth]{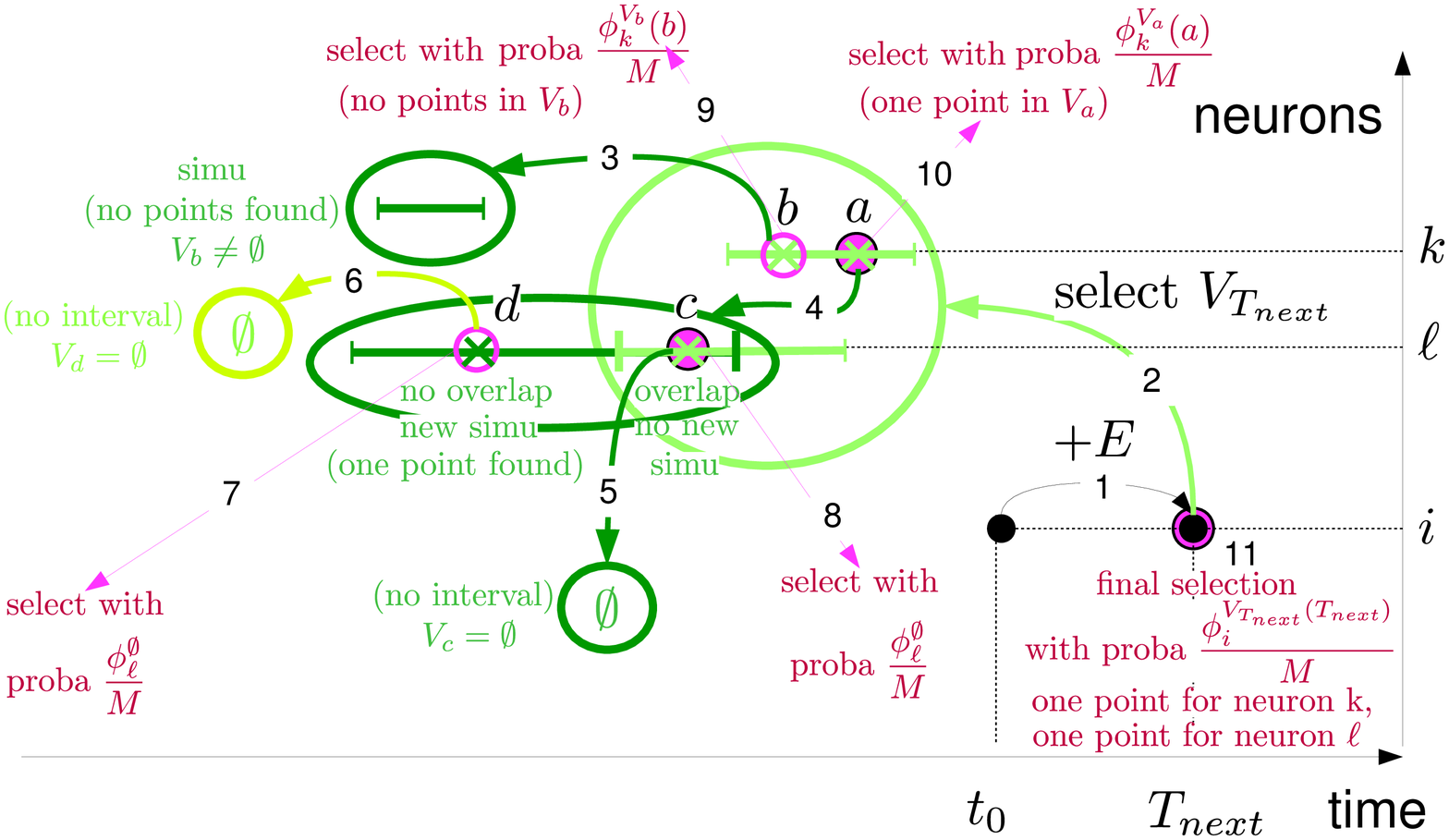}  
  %\caption{Main flow example for Algorithm~\ref{algfull}}
  \label{fig:sub-first}
\end{subfigure}
\begin{subfigure}{.5\textwidth}
  \centering
  % include second image
  \includegraphics[width=1\linewidth]{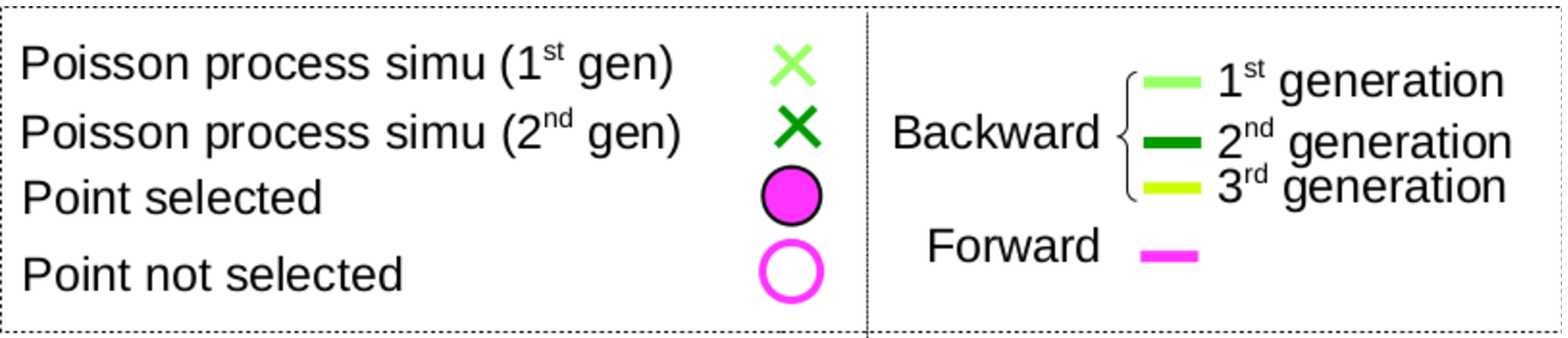}  
  %\caption{Legend}
  \label{fig:sub-second}
\end{subfigure}
\caption{Main flow example for Algorithm~\ref{algfull}, with backward steps in green (cf. Algorithm~\ref{algfull}, Steps 7-17) and forward steps in purple (cf. Algorithm~\ref{algfull}, Steps 18-25). Following arrow numbers: (1) The next point $T_{next} = t +E$ (cf. Algorithm~\ref{algfull}, Step 4) is scheduled, (2) The neighborhood  $V_{T_{next}}$ is selected %by both forward steps  and 
in the first backward step, a \emph{first generation} of three points ($a,b$ on neuron $k$ and $c$ on neuron $\ell$) is drawn (cf. Algorithm~\ref{algfull}, Step 9), thanks to  a Poisson process, (cf. Algorithm~\ref{algfull}, Steps 11-12) and appended to  ${\bf P}$ (cf. Algorithm~\ref{algfull}, Step 13), (3) at the \emph{second generation}, a non empty neighborhood is found, \textit{i.e.} $V_b \neq \emptyset$ (cf. Algorithm~\ref{algfull}, Steps 9-1), but the  Poisson process simulation does not give any point in it (cf. Algorithm~\ref{algfull}, Step 12), (4) at the \emph{second generation}, the neighborhood $V_a$ is picked, it is not empty and overlap the neighborhood of the first generation (cf. Algorithm~\ref{algfull}, Steps 9-11): therefore there is no new simulation in the overlap ($c$ is kept and belongs to $V_b$ as well as $V_a$) but there is a new simulation thanks to a Poisson process outside of the overlap leading to a new point $d$ (cf. Algorithm~\ref{algfull}, Step 12)%,two points, $c$ and $d$, are generated based on Poisson simulation at Step 2, with a portion of the interval of $c$ intersecting the interval of $d$ (cf. Algorithm~\ref{algfull}, Step 11) %leading to a new Poisson simulation at no overlap and to no new simulation at overlap (cf. Algorithm~\ref{algfull}, Step 12), 
(5) at the \emph{second generation}, for point $c$, one pick the empty neighborhood, %no point and no interval are generated based on Poisson simulation at Step 2, 
\textit{i.e.} $V_c = \emptyset$ (cf. Algorithm~\ref{algfull}, Step 9) and therefore we do not simulate any Poisson process, (6) at \emph{third generation}, similarly no point and no interval are generated, \textit{i.e.} $V_d = \emptyset$ (cf. Algorithm~\ref{algfull}, Step 9). This is the end of the backward steps and the beginning of the forward ones, (7) the point $d$ is not selected, acceptation/selection taking place with probability $\frac{\phi_{\ell}^{\emptyset}}{M}$ (cf. Algorithm~\ref{algfull}, Step 20), (8) the point $c$ is  accepted, here again  with probability $\frac{\phi_{\ell}^{\emptyset}}{M}$ (cf. Algorithm~\ref{algfull}, Step 20), (9) the point $b$ is not selected, acceptation taking place, here, with  probability $\frac{\phi_{k}^{V_b}(b)}{M}$ (cf. Algorithm~\ref{algfull}, Step 20), (10) the point $a$ is selected, acceptation taking place, here, with probability $\frac{\phi_{k}^{V_a}(a)}{M}$ (cf. Algorithm~\ref{algfull}, Step 20), (11) The neighborhood of neuron $i$ contains two points, one on neuron $k$ and one on neuron $\ell$ and one selects $T_{next}$ with probability $\frac{\phi_{i}^{V_{T_{next}}}(T_{next})}{M}$}.
\label{fig:0}
\end{figure}

%\subsection{Useful upper bounds of time and space}

%\subsubsection{On time}
%By construction, if starting point has time $T$, then any point belongs to the generation $k$, $\mathcal{C}^k$, will have index time that at least larger than $T- k \times A$.
%Moreover, the furthest point in time will in fact belongs to $\mathcal{C}^{N_{stop}}$. Therefore, we can give an upper bound for the distance we need to look back in time is $A \times N_{stop}$.

%\subsubsection{On space}

%Since the number of neurons contribute to a generation is less than the number of sites in this generation, then first of all, we construct an upper bound for the number of sites in a generation.

%We build up a tree $\tilde{\mathcal{C}}$ with root $(i,T)$ by mimicking the previous procedure in Step 3 and Step 4 of the backward algorithm, except that at each time $n+1$, we do not remove the children which  belong to $\mathcal{C}^{n}$ (if they exist). 

%More precisely, let us define, for any $k \in \mathds{N}$,
%\begin{equation*}
% \tilde{\mathcal{C}}^{k+1} = \bigcup_{T_n \in \tilde{\mathcal{C}^k}}\{ \{T_m  \in \Pi \} \cap [T_n-A,T_n) \} \overset{def}{=}  \bigcup_{T_n \in \tilde{\mathcal{C}^k}} K_n. 
%\end{equation*}

%Denote the total number of sites in $\tilde{\mathcal{C}}^k$ by $Z^{(k)}$. One then can concludes that,
%\begin{equation*}
%\mathds{E}(Z^{(k)}) \leq (\sup_{n} \mathds{E}(|K_n|))^{k} <1.
%\end{equation*}
%The last inequality use the sparsity neighborhood assumption.

\section{Illustration}

To illustrate in practice the algorithm, we have simulated a Hawkes process as given in \eqref{Hawkesdef}. Indeed such a process has a Kalikow decomposition \eqref{Kalikow decomposition} with bound $M$ and  neighborhood family $\mathcal{V}$ constituted of the $v$'s of the form $v=\{(j,[-A,0))\}$ for some neuron $j$ in ${\bf I}$.
To do that, we need the following choices:
$$\lambda_i(\emptyset) = 1 -  \sum\limits_{j \in {\bf I}} w_{ij} \quad \mbox{ and } \quad \phi_i^{\emptyset} = \dfrac{\nu_i}{\lambda_i(\emptyset)} $$
and for $v$ of the form $v=\{(j,[-A,0))\}$ for some neuron $j$ in ${\bf I}$,
$$\lambda_i(v)=  w_{ij} \quad \mbox{ and } \quad \phi_i^v={\bf Nb}^j_{[-A,0)}\wedge M.$$

We have taken ${\bf I}=\mathds{Z}^2$ and the $w_{ij}$ proportional to a discretized centred symmetric bivariate Gaussian distribution of standard deviation $\sigma$. More precisely, once $\lambda_i(\emptyset)=\lambda_\emptyset$ fixed, picking according to $\lambda_i$ consists in
\begin{itemize}
\item choosing whether $V$ is empty or not with probability $\lambda_\emptyset$
\item if $V\neq\emptyset$, choosing $V=\{(j,[-A,0))\}$ with $j-i=\mbox{round}(W)$ and $W$ obeys a  bivariate $\mathcal{N}(0,\sigma^2)$.
\end{itemize}

In all the sequel, the simulation is made for neuron $i=(0,0)$ with $t_0=0$, $t_1=100$ (see Algorithm \ref{algfull}).
The parameters $M,\lambda_\emptyset$ and $\sigma$ vary. The parameters $\nu_i=\nu$ and $A$ are fixed accordingly by 
$$\nu=0.9M\lambda_\emptyset \mbox{ and } A=0.9 M^{-1}(1-\lambda_\emptyset)^{-1},$$
to ensure that $\phi_i^{\emptyset}<M$ and \eqref{Sparsity}, which amounts here to $(1-\lambda_\emptyset)AM<1$.

On Figure \ref{fig:1}(a), with $M=2$, $\sigma = 1$ and $\lambda_{\emptyset}$ small, we see the overall spread of the algorithm around the neuron to simulate (here $(0,0)$). Because we chose a Gaussian variable with small variance for the $\lambda_i$'s, the spread is weak and the neurons very close to the neuron to simulate are requested a lot of time at Steps 9-11 of the algorithm. This is also where the algorithm spent the biggest amount of time to simulate Poisson processes. Note also that roughly to simulate 80 points, we need to simulate 10 times more points globally in the infinite network. Remark also on Figure \ref{fig:1}(b), the avalanche phenomenon, typical of Hawkes processes: for instance the small cluster of black points on neuron with label 0 (i.e. (0,0)) around time 22, is likely to be due to an excitation coming for the spikes generated (and accepted) on neuron labeled 8 and self excitation. The beauty of the algorithm is that we do not need to have the whole time line of neuron 8 to trigger neuron 0, but only the small blue pieces: we just request them at random, depending on the Kalikow decomposition.

On Figure \ref{fig:2}, we can first observe that when the parameter which governs the range of $\lambda_i$'s increase, the global spread of the algorithm increase. In particular, comparing the top left of Figure  \ref{fig:2} to Figure \ref{fig:1} where the only parameter that changes is $\sigma$, we see that the algorithm is going much further away and simulates much more points for a sensible equal number of points to generate (and accept) on neuron (0,0). Moreover we can observe that
\begin{itemize}
\item From  left  to  right,  by  increasing $\lambda_\emptyset$,  it  is  more  likely  to  pick  an  empty neighborhood and as a consequence, the spread of the algorithm is smaller. By increasing $\nu=0.9M\lambda_\emptyset$, this also increases the total number of points produced on neuron (0;0).
\item From  top  to  bottom,  by  increasing $M$,  there  are  more  points  which  are simulated in the Poisson processes (Step 12 of Algorithm 3) and there is also a stronger interaction (we do not truncate that much the number of points in $\phi^v$). Therefore,  the spread becomes larger and more uniform too, because there are globally more points that are making requests. Moreover, by having a basic rate $M$ which is 10 times bigger, we have to simulate roughly 10 times more points.
\end{itemize}

\begin{figure}[H]
\captionsetup{type=figure}
\begin{subfigure}{.4\textwidth}
  \centering
  % include first image
  \includegraphics[width=1\linewidth]{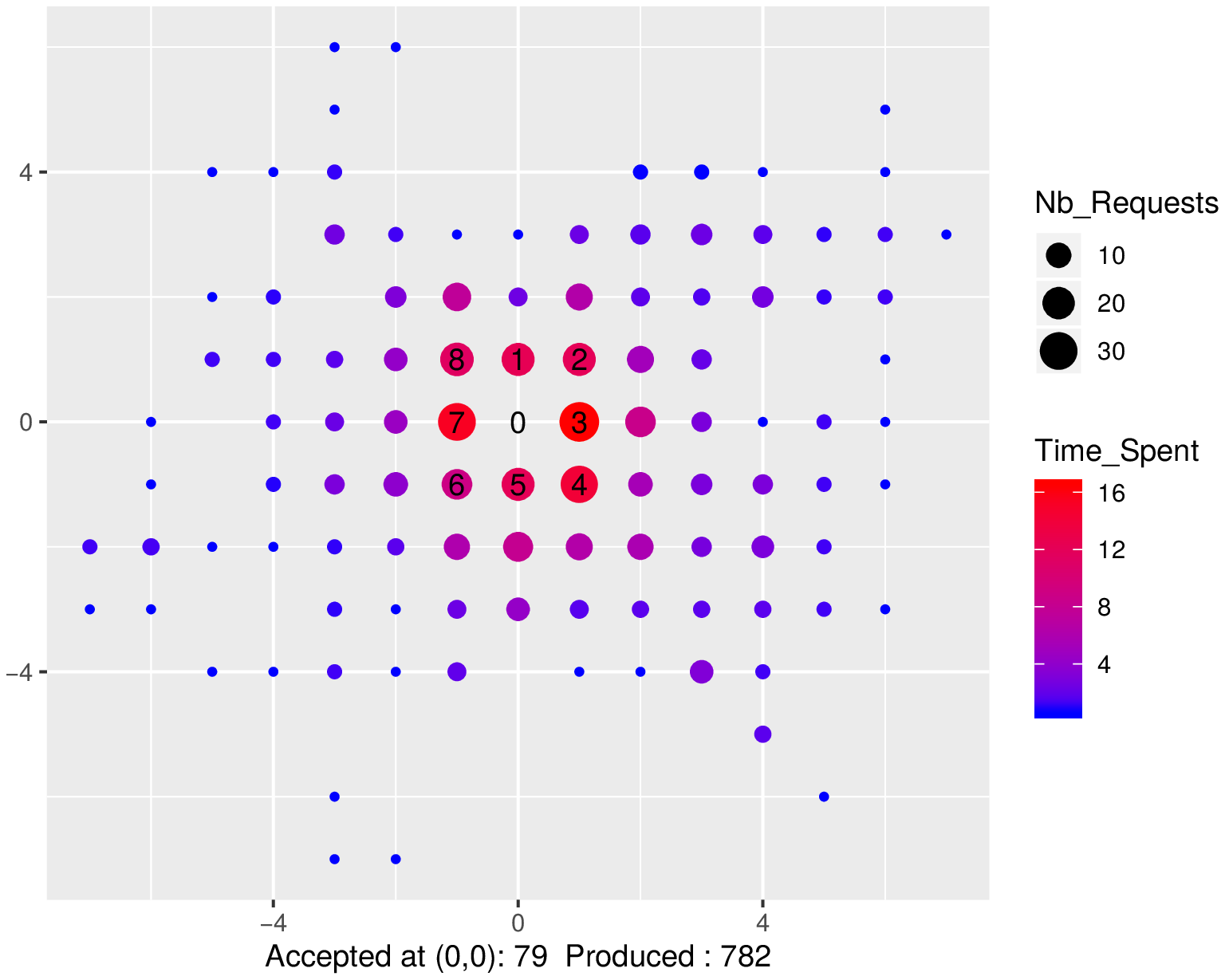}  
  \caption{Summary of one simulation}
  \label{fig:sub-first}
\end{subfigure}
\begin{subfigure}{.4\textwidth}
  \centering
  % include second image
  \includegraphics[width=1\linewidth]{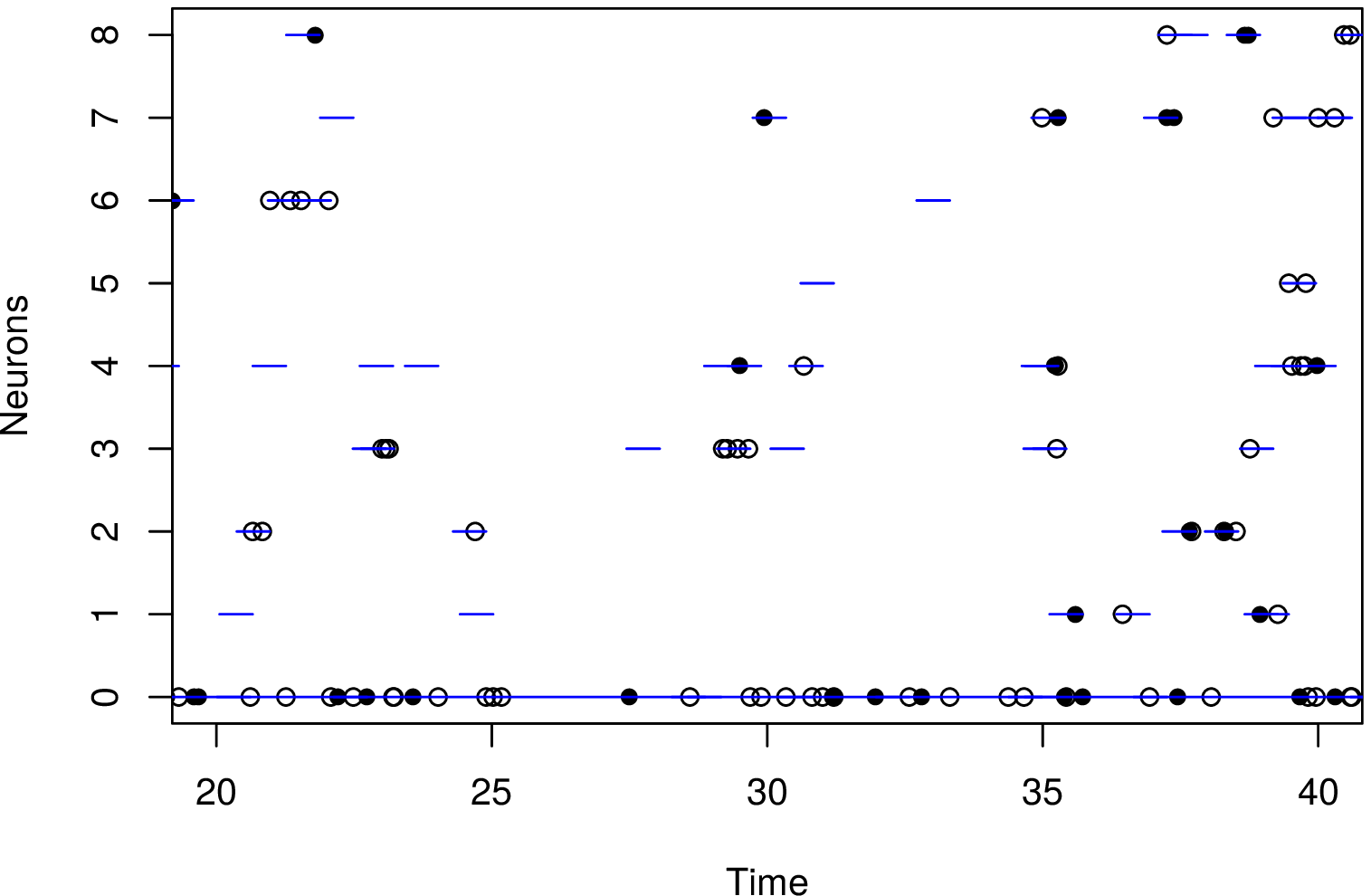}  
  \caption{Extract of the time simulation for neurons 0 to 9}
  \label{fig:sub-second}
\end{subfigure}
\caption{Simulation for $M=2$, $\sigma =1$, $\lambda_{\emptyset} = 0.25$. For each neuron in $\mathds{Z}^2$, that have been requested in Steps 9:11, except the neuron of interest $(0,0)$, have been counted the total number of requests, that is the number of time a $V_T$ pointed towards this neuron (Steps 9 and 11) and the total time spent at this precise neuron simulating a homogeneous Poisson process (Step 12). Note that since the simulation is on $[0,100]$ the time spent at position $(0,0)$ is at least 100. On (a), the summary for one simulation with below the plot, number of points accepted at neuron $(0,0)$ and total number of points that have been simulated. Also annotated on (a), with labels between 0 and 8, the 9 neurons for which the same simulation in [20,40] is represented in more details on (b). More precisely on (b), in abscissa is time and the neuron labels in ordinate. A plain dot represents an accepted point, by the thinning step (Step 20 of Algorithm 3), and  an empty round, a rejected point. The blue pieces of line represent the non empty neighborhoods that are in  ${\bf V}$.
}
\label{fig:1}
\end{figure}

\begin{figure}[H]
\captionsetup{type=figure}
\centering
\begin{tabular}{cc}
\includegraphics[width=.5\linewidth]{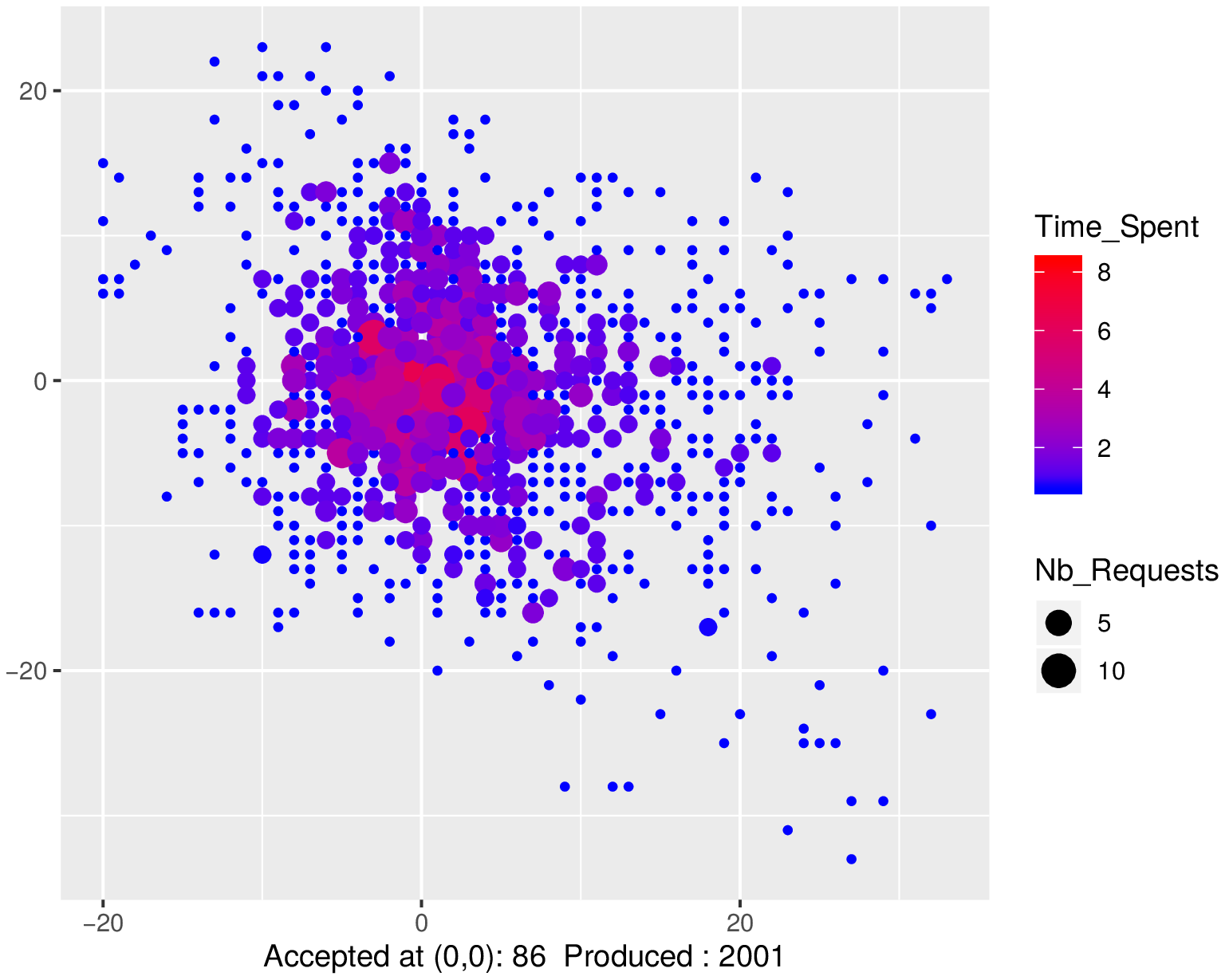}& \includegraphics[width=.5\linewidth]{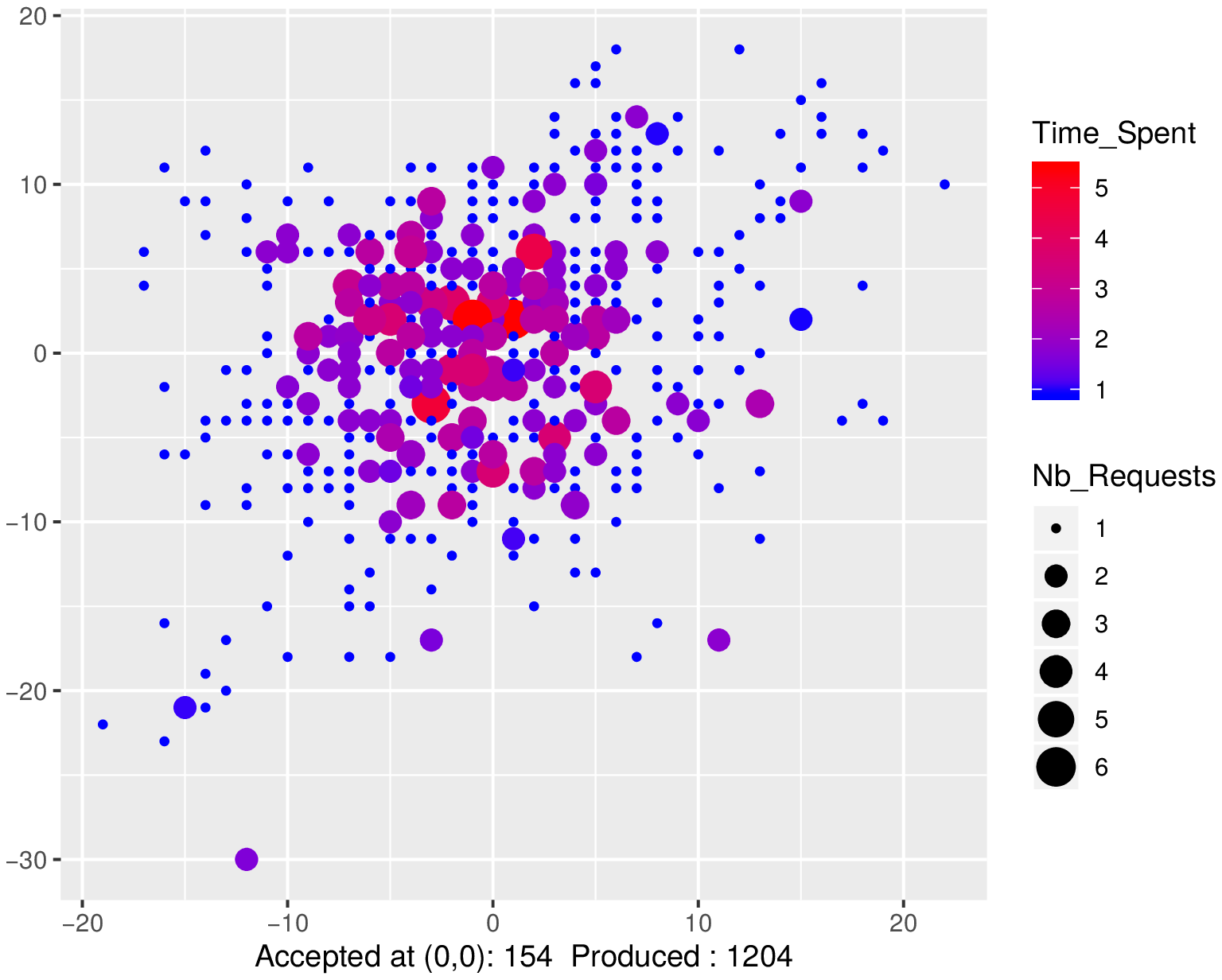}  \\
\includegraphics[width=.5\linewidth]{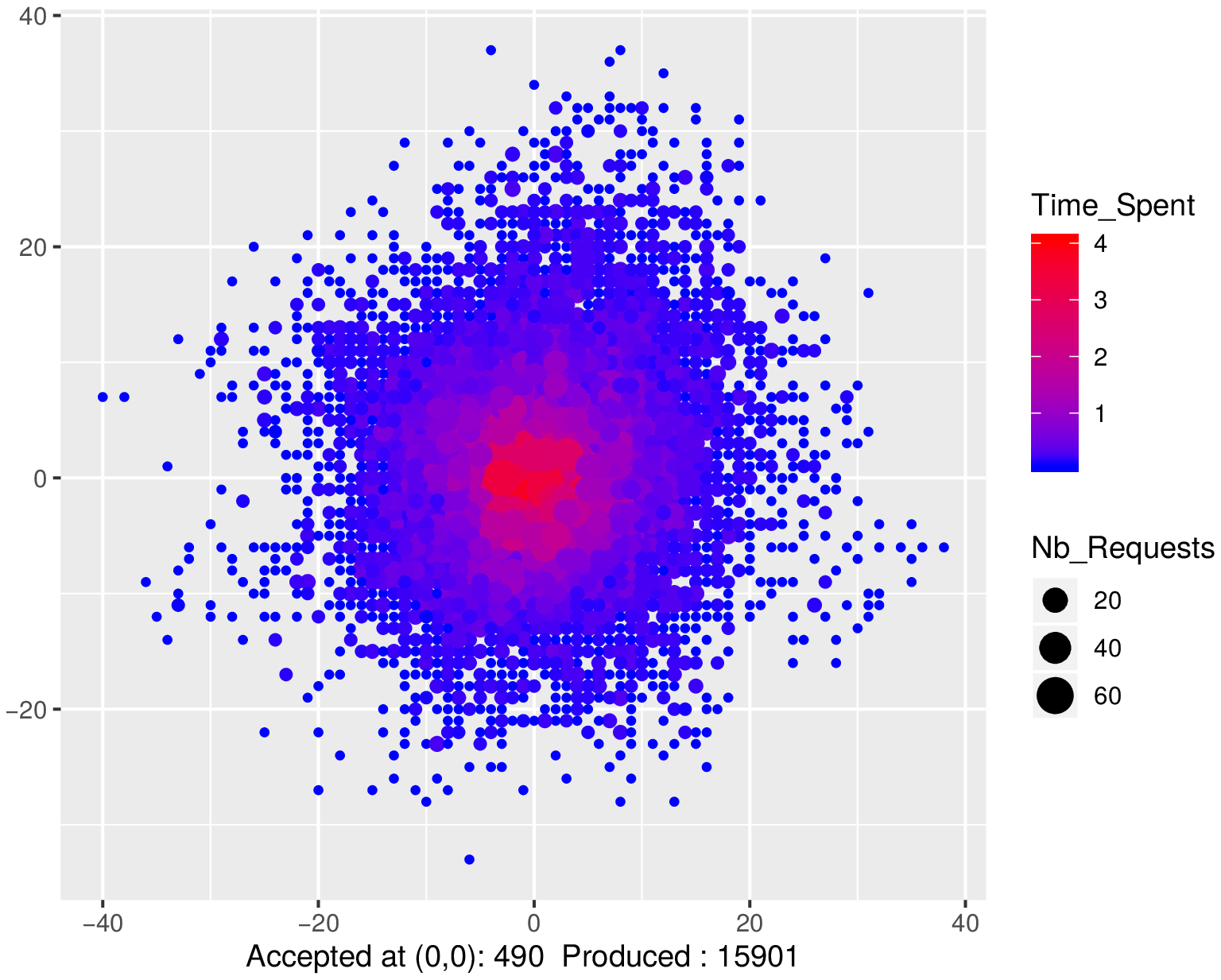}& \includegraphics[width=.5\linewidth]{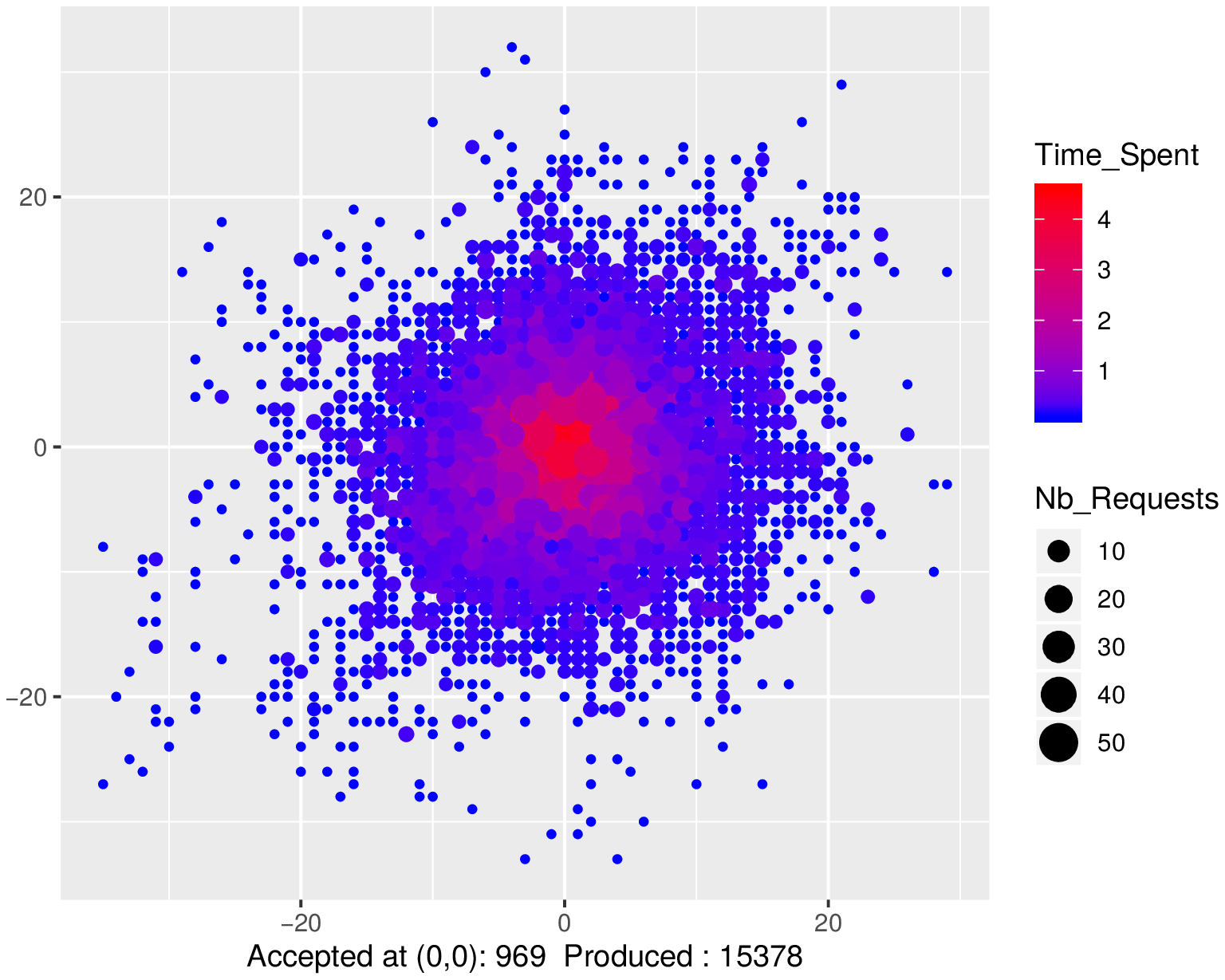} 
\end{tabular}
\caption{Simulation for 4 other sets of parameters, all of them with $\sigma=3$. Summaries as explained in Figure \ref{fig:1}. On top, $M=2$; on bottom, $M=20$. On the left part, $\lambda_{\emptyset} =0.25$, on the right part, $\lambda_{\emptyset}=0.5.$  }
\label{fig:2}
\end{figure}

\section{Conclusion}
We derived a new algorithm for simulating the behavior of one neuron embedded in an infinite network. This is possible thanks to the Kalikow decomposition which allows picking at random the influencing neurons. As seen in the last section, it %can be really applied in practice 
is computationnally tractable in practice to simulate open systems in the physical sense. A question that remains open for future work is whether we can prove that such a decomposition exists for a wide variety of processes, as it has been shown %for the 
in discrete time (see \cite{gl1,gl2,orb}).

\section*{Acknowledgements}
This work was supported by the French government, through the UCA$^{Jedi}$ Investissements d'Avenir managed by the National Research Agency (ANR-15-IDEX-01) and by the interdisciplinary Institute for Modeling in Neuroscience and Cognition (NeuroMod) of the Universit\'e C\^ote d'Azur. The authors would like to thank Professor E.L{\"o}cherbach from Paris 1 for great discussions about Kalikow decomposition and Forward Backward Algorithm.

\appendix

\section{Link between Algorithm 2 and the Kalikow decomposition}
\label{appendix:A}

To prove that Algorithm \ref{algth} returns the desired processes, let us use some additional and  more mathematical notation. Note that all the points simulated on neuron $i$ before being accepted or not can be seen as coming from a common Poisson process of intensity $M$, denoted $\Pi_i$. 
For any $i \in \mathbf{I}$, we denote %$\Pi^i$ as a homogeneous Poisson process (PP) with intensity $M$, and 
the  arrival times of $\Pi_i$, $(T^i_n)_{n \in \mathds{Z}}$, with $T^i_1$ being the first positive time.

As in Step 6 of Algorithm \ref{algth}, we attach to each point of $\Pi^i$ a stochastic mark $X$ given by,
\begin{equation}
X^i_{n} = 
\begin{cases}
1 \quad &\text{if} \quad T^i_n \, \text{is accepted in the thinning procedure}\\
0  \quad  &otherwise.
\end{cases}
\end{equation}
Let us also define $V^i_n$ the neighborhood choice of $T^i_n$ picked at random and independently of anything else according to $\lambda_i$ and shifted at time $T^i_n$.

In addition, for any $i \in \mathbf{I}$, define $N^i= (T^i_n,X^i_n)_{n \in \mathds{Z}}$ an $E$-marked point process with $E=\{0;1\}$.
In particular, following the notation in Chapter VIII of \cite{Bre81}, for any $i \in \mathbf{I}$, let
\begin{align*}
N^i_t(mark) &= \sum_{n \in \mathds{Z}} \mathds{1}_{X^i_n =mark} \mathds{1}_{T^i_n \leq t} \quad \text{for} \quad mark \in E \\
%N^i_t &:= N^i_t(E) \\
%N_t &:= N_t(E) = \sum_{i \in \mathbf{I}} N^i_t(E) \\
\mathcal{F}^{N}_{t^{-}}&= \bigvee_{i \in \mathbf{I}}  \sigma(N^i_s({0}), N^i_s({1}); s<t) \quad \mbox{and} \quad
\mathcal{F}^{N(1)}_{t^{-}}&= \bigvee_{i \in \mathbf{I}}  \sigma(N^i_s({1}); s<t).
\end{align*}

Moreover note that $(N_t^i(1))_{t\in \mathds{R}}$ is the counting process associated to the point processs ${\bf P}$ simulated by Algorithm \ref{algth}. Let us denote by $\varphi_i(t)$, the formula given by \eqref{Kalikow decomposition} and shifted at time $t$. Note that since the $\phi_i^v$'s are $\mathcal{F}_{0-}^{int}=\mathcal{F}_{0-}^{N(1)}$, $\varphi_i(t)$ is  $\mathcal{F}^{N(1)}_{t^{-}}$ measurable. We also denote $\varphi_i^v(t)$ the formula of $\phi_i^v$ shifted at time $t$.

With this notation, we can prove the following.

\begin{proposition}
The  process $(N^i_t(1))_{t\in \mathds{R}}$ admits $\varphi_i(t)$  as  $\mathcal{F}^{N(1)}_{t^{-}}$-predictable intensity.
\end{proposition}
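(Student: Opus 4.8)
The plan is to invoke the classical integral characterisation of the predictable intensity of a point process (see, e.g., \cite{Bre81}): a nonnegative $\mathcal{F}^{N(1)}_{t^-}$-predictable process $(\psi_t)_{t\in\R}$, with $\int_a^b\psi_t\,dt<\infty$ a.s. on bounded intervals, is the $\mathcal{F}^{N(1)}$-intensity of $(N^i_t(1))$ if and only if
\[
\E\Big[\int_{\R} C_t \, dN^i_t(1)\Big] \;=\; \E\Big[\int_{\R} C_t\, \psi_t\, dt\Big]
\]
for every nonnegative $\mathcal{F}^{N(1)}_{t^-}$-predictable process $(C_t)$. Accordingly the argument splits into (a) showing that $\varphi_i$ is itself $\mathcal{F}^{N(1)}_{t^-}$-predictable, and (b) establishing the above identity with $\psi=\varphi_i$.

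Part (a) is short. By definition $\varphi_i(t)=\lambda_i(\emptyset)\,\phi_i^{\emptyset}+\sum_{v\neq\emptyset}\lambda_i(v)\,\varphi_i^v(t)$, where $\varphi_i^v(t)$ is the rule $\phi_i^v$ shifted at $t$. Since the timeline of $v$ lies in $\R_-^*$, the shifted neighbourhood $v_t$ sits strictly before $t$, so $\varphi_i^v(t)$ depends only on the accepted points ($X=1$) falling inside $v_t$, hence is $\mathcal{F}^{N(1)}_{t^-}$-measurable; taking the windows half-open on the right (the $\mathcal{F}^{int}_{t^-}$-version) makes $t\mapsto\varphi_i^v(t)$ left-continuous, hence predictable. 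The series converges uniformly because $0\le\varphi_i^v\le M$ and $\sum_v\lambda_i(v)=1$, so $\varphi_i$ is $\mathcal{F}^{N(1)}_{t^-}$-predictable and bounded by $M$; local integrability is then automatic.

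For part (b) the plan is to enlarge the filtration. Let $\mathcal{G}_t$ record, for every neuron $j$, the entire Poisson configuration $\Pi_j$ on $(-\infty,t]$ together with the marks $(V^j_n,X^j_n)$ of all its points with $T^j_n\le t$; then $\mathcal{F}^{N(1)}_{t^-}\subseteq\mathcal{G}_{t^-}$. Because the construction first samples all the Poisson processes and only then assigns the marks recursively in increasing time order, each $\Pi_j$ retains $\mathcal{G}$-intensity $M$, and, given $\mathcal{G}_{t^-}$ and the occurrence of a point of $\Pi_i$ at time $t$, the neighbourhood mark of that point has law $\lambda_i$ (shifted at $t$) independent of $\mathcal{G}_{t^-}$, after which it is accepted with probability $\varphi_i^{V}(t)/M$, the quantity $\varphi_i^v(t)$ being $\mathcal{G}_{t^-}$-measurable for each fixed $v$ by the window argument of part (a). Hence the $\mathcal{G}$-compensator of $N^i(1)$ has density $M\cdot\E\big[\varphi_i^{V}(t)/M\mid\mathcal{G}_{t^-}\big]=\sum_v\lambda_i(v)\,\varphi_i^v(t)=\varphi_i(t)$, the last equality being exactly \eqref{Kalikow decomposition} shifted at $t$. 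Thus $N^i(1)$ has $\mathcal{G}$-predictable intensity $\varphi_i(t)$, so the displayed identity holds for every nonnegative $\mathcal{G}$-predictable $C$, in particular for every $\mathcal{F}^{N(1)}_{t^-}$-predictable one; with part (a) this is precisely the characterisation, and the claim follows. (Put differently: $\varphi_i$ being already $\mathcal{F}^{N(1)}_{t^-}$-predictable, it coincides with its own $\mathcal{F}^{N(1)}$-predictable projection, and the conclusion is the innovation theorem applied to $\mathcal{F}^{N(1)}\subseteq\mathcal{G}$.)

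The main obstacle is to make the $\mathcal{G}$-compensator computation rigorous. Two points deserve care. First, conditioning a Poisson process on having a point at a prescribed time is not literally meaningful and should be carried out via Palm calculus / Mecke's formula, which is what legitimises inserting the acceptance probability $\varphi_i^{V}(t)/M$ into the compensator. Second, and this is the genuinely delicate issue, the recursive assignment of the marks $(V^j_n,X^j_n)$ presupposes that the superposition of the $\Pi_j$'s admits no leftward accumulation of points in the relevant windows; this is harmless for a finite network, but for a countably infinite one it is exactly the kind of termination guaranteed by condition \eqref{Sparsity} and the backward construction of Section~4, so one would either assume $\mathbf{I}$ finite here and pass to the limit along an exhausting sequence of sub-networks, or quote that construction. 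The remaining points --- reducing the statement on all of $\R$ to bounded intervals via stationarity, and local integrability of $\varphi_i$ --- are routine.
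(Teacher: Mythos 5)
Your proof is correct and follows essentially the same route as the paper's: both rest on Br\'emaud's integral characterisation of the predictable intensity, condition on an enlarged filtration containing all the auxiliary Poisson points together with the independent neighbourhood choice to insert the acceptance probability $\varphi_i^{V}(t)/M$, average over $V\sim\lambda_i$ to recover the Kalikow decomposition, and finish using that $\Pi^i$ has rate $M$ with respect to that larger filtration. The only difference is packaging --- you phrase the last step as a compensator computation for $\mathcal{G}$ followed by the innovation/projection argument, while the paper carries out the same computation directly as a sum of conditional expectations over the atoms $T^i_n$ --- and your added remarks on measurability of $\varphi_i$ and on the well-posedness of the recursive mark assignment are reasonable refinements of points the paper passes over.
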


\begin{proof}
Following the technique in Chapter 2 of \cite{Bre81}, let us take $C_t$ a non negative predictable function with respect to (w.r.t) $\mathcal{F}^{N^i(1)}_{t}$ that%, without loss of generality, 
is $\mathcal{F}^{N(1)}_{t-}$ measurable and therefore $\mathcal{F}^{N}_{t-}$ measurable . We have, for any $i \in \mathbf{I}$,
%\cuong{I don't get why you use without loss of generality here, I think we can derive it directly to $\mathcal{F}^{N}_{t-}$}
$$
\mathds{E}\left( \int\limits_{0}^{\infty} C_t dN^i_t(1) \right) = \sum_{n=1}^\infty \mathds{E}\left(C_{T^i_n} \mathds{1}_{X^i_{n} = 1} \right)$$
Note that by Theorem T35 at Appendix A1 of \cite{Bre81}, any point $T$ should be understood as a stopping time, and  that by Theorem T30 at Appendix A2 of \cite{Bre81}, 
$$\mathcal{F}^{N}_{T-} = \bigvee_j \sigma\{T^j_m, X^j_m \, \text{such that} \, T^j_m < T\}$$ 
So
$$
\mathds{E}\left( \int\limits_{0}^{\infty} C_t dN^i_t(1) \right) = \sum_{n=1}^\infty \mathds{E}\left(C_{T^i_n} \mathds{E}( \mathds{1}_{X^i_{n} = 1} | \mathcal{F}^{N}_{{T^i_n}^{-}}, {V}^i_n)\right)= \sum_{n=1}^\infty \mathds{E} \left(  C_{T^i_n}  \dfrac{\varphi_i^{V^i_n}(T^i_n)}{M} \right).$$
Let us now integrate with respect to the choice $V^i_n$, which is independent of anything else.
$$
\mathds{E}\left( \int\limits_{0}^{\infty} C_t dN^i_t(1) \right) = \sum_{n=1}^\infty \mathds{E}\left(C_{T^i_n} \frac{\lambda_i(\emptyset) \varphi_i^{\emptyset} + 
\sum\limits_{v \in \mathcal{V}, v\neq\emptyset}\lambda_i(v) \times \varphi_i^{v}(T^i_n)}{M} \right)= \mathds{E}\left(\int\limits_0^{\infty} C_t \frac{\varphi_i(t)}{M} d\Pi^i(t)\right).$$
Since $\Pi^i$ is a Poisson process with respect to $(\mathcal{F}_t^N)_t$
with intensity $M$, and since  $C_t \frac{\varphi_i(t)}{M}$ is $\mathcal{F}_{t-}^N$ measurable, we finally have that
$$
\mathds{E}\left( \int\limits_{0}^{\infty} C_t dN^i_t(1) \right) = \mathds{E}\left(\int\limits_0^{\infty}C_t\varphi_i(t) dt\right),$$
which ends the proof.
\end{proof}

\section{Proof of Proposition \ref{prop 1}}
\label{appendix:B}

\begin{proof}
We do the proof for the \textit{backward} part, starting with $T=T_{next}$ as the next point after $t_0$ (Step 4 of Algorithm \ref{algfull}), the proof being similar for the other $T_{next}$ generated at Step 23. We construct a tree with root $(i,T)$. For each point $(j_{T'},T')$ in the tree, the points which are simulated in $V_{T'}$ (Step 12 of Algorithm \ref{algfull}) define the children of $(j_{T'}, T')$ in the tree. This forms the tree $\tilde{\mathcal{T}}$.

%Since there are finite points between $t_0$ and $t_1$, then it is sufficient to prove that $N^T_{stop}$ is finite almost surely.
Let us now build a tree $\tilde{\mathcal{C}}$ with root $(i,T)$ (that includes the previous tree) by mimicking the previous procedure in the \textit{backward} part, except that we  simulate on the whole neighborhood even if it has a part that intersects with previous neighborhoods (if they exist) (Step 11-12 of Algorithm \ref{algfull}). By doing so, we make the number of children at each node independent of anything else.

If the tree $\tilde{\mathcal{C}}$ goes extinct then so does the tree  $\tilde{\mathcal{T}}$ and the backward part of the algorithm terminates.

But if one only counts the number of children in the tree $\tilde{\mathcal{C}}$, we have a marked branching process whose reproduction distribution for the mark $i$ is given by
\begin{itemize}
\item no children with probability $\lambda_i(\emptyset)$
\item Poissonian number of children with parameter $l(v)M$ if $v$ is the chosen neighborhood with probability $\lambda_i(v)$
\end{itemize}
This gives that the average number of children issued from a node with the mark $i$ is
$$\zeta_i=\lambda_i(\emptyset) \times 0+\sum_{v\in\mathcal{V}, v \neq \emptyset} \lambda_i(v)  l(v)M.$$

%\cuong{Therefore \eqref{Sparsity} is the condition given by \cite{Mel10} to guarantee almost sure extinction of the branching process and therefore of $\tilde{\mathcal{C}}$. This ends the proof.
%\begin{comment}
If we denote $\tilde{\mathcal{C}}^k$ as the collection of points in the tree $\tilde{\mathcal{C}}$ at generation $k$, and by $K_{T'}$ the set of points generated independently as a Poisson process of rate $M$ inside $V_{T'}$, we see recursively that  %and $V^j_n$ corresponds to the neighborhood of $T^j_n$, then we define for any $k \in \mathds{N}$,
\begin{equation*}
 \tilde{\mathcal{C}}^{k+1}  %\bigcup_{j \in \mathbf{I}} \bigcup_{T^j_n \in \tilde{\mathcal{C}^k}}\{\text{ Points are simulated inside the neighborhood} \quad V^j_n \} \overset{def}{=}  \bigcup_{j \in \mathbf{I}}\bigcup_{T^j_n \in \tilde{\mathcal{C}^k}} K^j_n. 
 = \bigcup_{T'\in \tilde{\mathcal{C}}^{k}} K_{T'}
\end{equation*}
But $$\mathds{E}(|K_{T'}| | T')= \zeta_{j_{T'}}.$$
Therefore, if we denote the total number of sites in $\tilde{\mathcal{C}}^k$ by $Z^{(k)}$, we have
$$\mathds{E}(Z^{(k+1)}|\tilde{\mathcal{C}}^k) \leq  Z^{(k)} \sup_{i\in I} \zeta_i.$$

One can then conclude by recursion that,
\begin{equation*}
\mathds{E}(Z^{(k)}) \leq (\sup_{i\in I} \zeta_i)^{k} <1.
\end{equation*}
The last inequality use the sparsity neighborhood assumption. Then we deduce that, the mean number of children in each generation goes to $0$ as $k$ tends to infinity. So by using classical branching techniques in \cite{Mel10}, we conclude that the tree $\tilde{\mathcal{C}}$ will go extinct almost surely. This  also implies that, the backward steps end a.s. 
%\end{comment}
%\cuong{I think we don't define $N_stop$ properly before}
%\qed
\end{proof}

\end{document}